\title{Beyond Outerplanarity}
\author{Steven~Chaplick}{Maastricht University, the Netherlands and Universit\"at W\"urzburg, Germany}{s.chaplick@maastrichtuniversity.nl}{https://orcid.org/0000-0003-3501-4608}{}
\author{Myroslav~Kryven}{Amherst College, USA and Universit\"at W\"urzburg, Germany}{mkryven@amherst.edu}{https://orcid.org/0000-0003-4778-3703}{}
\author{Giuseppe~Liotta}{Department of Engineering, Universit\`a degli
  Studi di Perugia, Italy}{giuseppe.liotta@unipg.it}{https://orcid.org/0000-0002-2886-9694}{}
\author{Andre~L\"offler}{Universit\"at W\"urzburg, Germany}{info@andre-loeffler.net}{}{}
\author{Alexander~Wolff}{Universit\"at W\"urzburg, Germany \and \url{https://www.informatik.uni-wuerzburg.de/en/algo/team/wolff-alexander}}{}{http://orcid.org/0000-0001-5872-718X}{}
\authorrunning{S.~Chaplick, M.~Kryven, G.~Liotta, A.~L\"offler, and
  A.~Wolff}
\keywords{Graph Drawing, Topological Graphs, Beyond Planarity, Outer
  $k$-Planarity, Outer $k$-Quasi-Planarity.}
\newtheorem{question}[theorem]{Question}
\definecolor{defblue}{rgb}{0.121,0.47,0.705}
\let\emph\relax
\DeclareTextFontCommand{\emph}{\color{defblue}\em}
\begin{document}

\maketitle

\begin{abstract}
We study straight-line drawings of graphs where the vertices are
placed in convex
position in the plane, i.e., \emph{convex drawings}. We consider two
families of graph classes with convex drawings:
\emph{outer $k$-planar} graphs, where each edge is crossed by at most $k$ other
edges; and \emph{outer $k$-quasi-planar} graphs, where no $k$ edges can
mutually cross.

We show that the outer $k$-planar graphs are
$\lfloor3.5\sqrt{k}\rfloor$-degenerate, and consequently that every
outer $k$-planar graph 
can be colored with $\lfloor3.5\sqrt{k}\rfloor + 1$ colors.
We further show that every outer $k$-planar graph has a balanced
vertex separator of size at most $2k+3$.
For each fixed $k$, these small balanced separators allow us to test
outer $k$-planarity in quasi-polynomial time,
e.g., this implies that none of these recognition problems is NP-hard
unless the Exponential Time Hypothesis fails.
We also show that the class of outer 3-quasi-planar graphs and the
class of planar graphs are incomparable.

Finally, we restrict outer $k$-planar and outer $k$-quasi-planar drawings
to \emph{full} drawings (where no crossing appears on the boundary of
the outer face) and to \emph{closed} drawings (where the vertex
sequence on the boundary of the outer face is a Hamiltonian cycle in
the graph).  For each $k$, we express \emph{closed outer $k$-planarity}
and \emph{closed outer $k$-quasi-planarity} in extended monadic
second-order logic.
Since every outer $k$-planar graph has treewidth $O(k)$, Courcelle's
theorem implies that closed outer $k$-planarity is linear-time
testable.  We leverage this result to further show that full outer
$k$-planarity can also be tested in linear time. 
\end{abstract}

\section{Introduction}

A \emph{drawing} of a graph maps each vertex to a distinct point in the plane, each edge to a Jordan curve connecting the points of its incident vertices but not containing the point of any other vertex, and two such Jordan curves have at most one common point.
In the last few years, the focus in graph drawing has shifted from
exploiting structural properties of planar graphs to addressing the
question of how to produce well-structured (understandable) drawings
in the presence of edge crossings, i.e., to the topic of
\emph{beyond-planar} graph classes.
The primary approach here has been to define and study graph classes
which allow some edge crossings, but restrict the crossings in various
ways, such as, for example, restricting the total number of
crossings~\cite{Grohe-JCSS04,kr-ccnilt-STOC07} or even bundling the edges and
restricting the number of crossings of bundles rather than of single
edges~\cite{h-hebva-TVCG06,cdkprw-bcr-GD19,cdkprw-bcr-JGAA2020}.
Two of the most commonly studied such graph classes are:
\begin{enumerate}
\item \emph{$k$-planar graphs}, that is, the graphs that can be drawn
  so that each edge is crossed by at most $k$ other edges.
\item \emph{$k$-quasi-planar graphs}, that is, the graphs that can be drawn so
  that no $k$ pairwise non-incident edges mutually cross.
\end{enumerate}
Note that the $0$-planar graphs and $2$-quasi-planar graphs are
precisely the planar graphs.  The $3$-quasi-planar
graphs are simply called \emph{quasi-planar}.
For example, $K_6$ is 1-planar and, hence, quasi-planar; see Fig.~\ref{fig:intro-non-convex}.
While all $k$-planar graphs are clearly $(k+2)$-quasi-planar, they are
actually even $(k+1)$-quasi-planar~\cite{kp-vs-kqp-2020}.

\begin{figure}
  \begin{subfigure}[b]{0.44\textwidth}
    \centering
    \includegraphics[page=1]{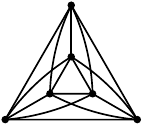}
    \caption{a 1-planar and, hence, quasi-planar drawing}
    \label{fig:intro-non-convex}
  \end{subfigure}
  \hfill
  \begin{subfigure}[b]{0.53\textwidth}
    \centering
    \includegraphics[page=2]{intro-K6}
    \caption{an outer 4-planar and outer 4-quasi-planar drawing}
    \label{fig:intro-convex}
  \end{subfigure}
  \caption{Two drawings of~$K_6$.  Note that $K_6$ is neither outer
    3-planar nor outer quasi-planar.}
  \label{fig:intro}
\end{figure}


In this paper we restrict the two above families of graph classes by
insisting on drawings where all vertices lie 
on the outer face and the edges are routed in the complement of the outer face. Note that this is equivalent to placing the vertices
in convex position and drawing the edges as straight-line segments.
The graphs that admit such drawings without crossings are the
\emph{outerplanar} graphs.  In this paper, we apply the above two
generalizations of planar graphs to outerplanar graphs, which yields
the classes of \emph{outer $k$-planar} graphs and \emph{outer
  $k$-quasi-planar} graphs.  For example, $K_6$ is neither outer
3-planar nor outer quasi-planar, but outer 4-planar and,
trivially, outer 4-quasi-planar (because it does not have four
independent edges); see Fig.~\ref{fig:intro-convex}.
For these classes, we consider balanced separators, treewidth,
degeneracy (see Section~\ref{sub:concepts} below), coloring, edge
density, and recognition.

\subsection{Related Work}

Ringel~\cite{Ringel1965} was the first to consider
$k$-planar graphs; he showed that $1$-planar graphs are
7-colorable.  Twenty years later, Ringel's result was improved by
Borodin~\cite{borodin1984-6col-1p}, who showed that 1-planar graphs
are in fact 6-colorable.  This is tight since $K_6$ is 1-planar.
Many additional results on 1-planarity can be found in a recent survey paper~\cite{1p-biblio-2017}.
Generally, every $n$-vertex $k$-planar graph has treewidth
$O(\sqrt{kn})$~\cite{dujmovic2017structure} and at most
$3.81n\sqrt{k}$ edges~\cite{a-tgfce-CG19}, and hence degeneracy at most
$\lfloor 2 \cdot 3.81\sqrt{k} \rfloor$ and chromatic
number at most $\lfloor 2 \cdot 3.81\sqrt{k} \rfloor + 1$.

Outer $k$-planar graphs have been considered mostly for $k \in \{0,1,2\}$.
Of course, the outer 0-planar graphs are the classic outerplanar
graphs which are well-known to be 2-degenerate and to have treewidth at
most~2. It was shown that essentially every graph property can be
tested efficiently on outerplanar graphs~\cite{babu_et_al:LIPIcs:2016:6644}.
Outer 1-planar graphs are a simple subclass of planar graphs and can
be recognized in linear time~\cite{Auer2016,Hong2015}.
\emph{Full outer $2$-planar graphs}, that is, outer 2-planar graphs
that admit a drawing where no crossing appears on the boundary of the
outer face, can be recognized in linear time~\cite{hn-ltatfo2p-DAM19}.
General outer $k$-planar graphs were considered by Binucci et
al.~\cite{Binucci2016}, who showed (among other results) that, for every $k$,
there is a 2-tree that is not outer $k$-planar.
Wood and Telle~\cite{Wood2007} considered a slight generalization of outer $k$-planar
graphs in their work and showed that these graphs have treewidth~$O(k)$. 
Outer 1-planar graphs have been compared~\cite{b-fcfcfg-IPL18} with \emph{fan-crossing} graphs 
(that is, graphs that have a drawing where each edge can only cross edges with a common endpoint) and
\emph{fan-crossing-free} graphs (that is, graphs that have a drawing
where no edge is crossed by two or more edges with a common endpoint).
Fan-crossing and fan-crossing-free are complementary properties,
in the sense that a drawing is 1-planar if and only if it is
fan-crossing and fan-crossing-free.  Brandenburg~\cite{b-fcfcfg-IPL18}
showed that there are graphs that are simultaneously (outer-)
fan-crossing and (outer-) fan-crossing-free but not (outer-) 1-planar.
Angelini, Da Lozzo, F\"orster, and
Schneck~\cite{alfs-2lkpg-GD20,alfs-2lkpg-CJ23} studied the
edge density of $k$-planar \emph{2-layer} layouts (where the vertices
lie on two horizontal lines and every edge is a y-monotone curve). 
In order to admit such a layout, a graph must be
bipartite, and the constraints on the placement of the vertices
emphasize its bipartite structure.
$k$-Planar layouts have also been studied in the context of
\emph{book embeddings}~\cite{lch-pafob-COCOA20,ackssuw-ecog-SWAT24},
where the vertices lie (possibly in a given fixed order) on a straight
line, the \emph{spine}, and the edges are placed in a given number of
halfplanes (called \emph{pages}) whose pairwise intersection is the spine.

The $k$-quasi-planar graphs have been studied extensively from the
perspective of edge density.
Pach, Shahrokhi, and Szegedy~\cite{Pach1996} conjectured that
every $n$-vertex $k$-quasi-planar graph has at most $c_kn$ edges,
where $c_k$ is a constant depending only on $k$.  Their conjecture has
been proven to hold for $k=3$~\cite{AT07} and $k=4$~\cite{Ackerman2009}. 
The best known general upper bound is
$n(\frac{c\log{n}}{\log{k}})^{2\log{k}-4}$~\cite{Fox_2022,FoxPS-EJC24},
where $c$ is a positive constant. 
Edge density was also considered in the ``outer'' setting:
Capoyleas and Pach~\cite{CP92} showed that every outer $k$-quasi-planar
graph with $n$ vertices has at most $2(k-1)n - \binom{2k-1}{2}$ edges.
Nakamigawa~\cite{Nakamigawa00} and
Dress, Koolen, and Moulton~\cite{DressKM02} showed that there
are outer $k$-quasi-planar graphs meeting this bound (if $n \geq 2k-1$).
Moreover, the outer $k$-quasi-planar graphs that meet this bound are
exactly the maximal outer $k$-quasi-planar graphs. (Recall that a
graph is \emph{maximal} with respect to a given graph property if
adding any edge to the graph destroys the property.)
Both teams of authors actually
showed that, given any two maximal outer $k$-quasi-planar graphs~$G$
and~$G'$ with the same vertex set but different edge sets, for every
two outer $k$-quasi-planar drawings~$\Gamma$ and~$\Gamma'$ of~$G$
and~$G'$, respectively, whose corresponding vertices are in the same
positions, there exists a sequence of local edge exchange operations
(called \emph{flips}) producing drawings
$\Gamma=\Gamma_1, \Gamma_2, \dots, \Gamma_t=\Gamma'$ such that every
intermediate drawing is an outer $k$-quasi-planar drawing.
More recently, it was
shown that the \emph{semi-bar $k$-visibility graphs} are outer
$(k+2)$-quasi-planar~\cite{Geneson_2014}.
Apart from these results, the outer $k$-quasi-planar graphs do not seem to have
received much attention.

The \emph{convex} (or \emph{1-page book}) \emph{crossing number} of a
graph~\cite{schaefer2013graph} is the minimum number of crossings,
taken over all convex drawing.  This concept has been introduced
several times (see~\cite{schaefer2013graph} for more details) and is
closely related to \emph{$k$-quasi-planar geometric graphs}, which are
graphs with straight-line $k$-quasi-planar
drawings~\cite{valtr-pavel}.

The convex crossing number is NP-complete to
compute~\cite{masuda1987np}.  However, Bannister and
Eppstein~\cite{Bannister_2014,be-cm1p2pdgbtw-JGAA18} used
treewidth-based techniques (via
extended monadic second order logic and Courcelle's theorem; see
Theorem~\ref{thm:courcelle}) to show that the convex crossing number
can be computed in linear time.
More precisely, they show that, given a graph $G$ with $n$ vertices
and $m$ edges, the convex crossing number of~$G$,
$\operatorname{cr}^\circ(G)$, can be computed in
$O(f(\operatorname{cr}^\circ(G))\cdot (n+m))$ time, where $f$ is a
computable function.  Thus, recognizing \emph{outer $k$-crossing
  graphs} is \emph{fixed-parameter tractable}.

There is a natural connection between outer $k$-planar graphs and
$(k/2)$-outerplanar graphs.
Consider an outer $k$-planar graph $G$ with some outer $k$-planar
drawing and place a dummy vertex at each crossing to get a planarized
graph~$G'$.  The graph~$G'$ is $(k/2)$-outerplanar.  The treewidth
of~$G'$ is known to be at most $3/2k$~\cite{bodlaender1998}.  We can
construct a tree decomposition of~$G$ from the tree decomposition
of~$G'$ by replacing every dummy vertex~$v$ of~$G'$ in each bag with
the four endpoints of $G$ corresponding to~$v$.  This increases the
treewidth of the decomposition of~$G'$ by a factor of~4.  Therefore,
the treewidth of~$G$ is at most~$6k$.

With more involved arguments, Wood and Telle~\cite{Wood2007} showed
that every outer $k$-planar graph (actually every graph from a
somewhat larger class of graphs) has treewidth at most $3k+11$.
Recently, this upper bound has been improved by Firman, Gutowski,
Kryven, Okada, and Wolff~\cite{fgkow-btokp-GD24} to
$\lfloor 1.5k \rfloor + 2$.
Strengthening the connection between outer $k$-planar graphs and
$O(k)$-outerplanar graphs,
Pyzik~\cite{pyzik2025treewidthouterkplanargraphs} constructed, for
every $k$, a $k$-outerplanar graph of treewidth $3k$ that is also
outer $2k$-planar.

\subsection{Preliminaries}
\label{sub:concepts}

We briefly define the key graph theoretic concepts that we will study.
Given a graph~$G$, let~$V(G)$ denote its vertex set and let~$E(G)$
denote its edge set.

A graph is \emph{$d$-degenerate}~\cite{lick1970k} if every subgraph of it has a vertex of degree at most $d$.
This concept is used in greedy algorithms for coloring. Namely, a $d$-degenerate graph can be inductively $(d+1)$-colored by simply removing a vertex of degree at most $d$.
A graph class is $d$-degenerate if every graph in the class is $d$-degenerate. Furthermore, a graph class which is \emph{hereditary} (i.e., closed under taking subgraphs) is $d$-degenerate when every graph in that class has a vertex of degree at most $d$.
Note that outerplanar graphs are 2-degenerate, and planar graphs are 5-degenerate.

A \emph{separation} of a graph $G$ is a pair $(A,B)$ of subsets of $V(G)$ such that
$A \cup B = V(G)$, and no edge of $G$ has one end in $A\setminus B$ and the other in
$B \setminus A$. The set $A \cap B$ is called \emph{separator}, and
the \emph{size} of the separation $(A, B)$ is $|A \cap B|$.
A separation $(A, B)$ of a graph $G$ on $n$ vertices is \emph{balanced} if
$|A \setminus B| \leq {2n}/{3}$
and $|B \setminus A| \leq {2n}/{3}$.
The \emph{separation number} of a graph $G$ is the smallest
number $s$ such that
every subgraph of $G$ has a balanced separation of size at most~$s$.
The \emph{treewidth} of a graph was introduced by Robertson and
Seymour~\cite{robertson1984graph}.
It is a fundamental graph parameter that measures how close a graph is
to being a tree.  Treewidth plays a crucial role in algorithm design,
as many otherwise intractable problems become efficiently solvable on
graphs of bounded treewidth.  Treewidth is closely related to
separation number.
Namely, any graph with treewidth~$t$ has separation number at most $t+1$
and, as Dvo\v{r}\'ak and Norin~\cite{dn-tgbs-JCTS19} showed, any graph
with separation number~$s$ has treewidth at most $15s$.
Furthermore, due to Courcelle's theorem~\cite{courcelle1990}
(which we cite in Theorem~\ref{thm:courcelle}), if a graph class has
bounded treewidth, it means that many problems can be solved
efficiently for graphs in that class.

A \emph{quasi-polynomial time} algorithm is one with a running time
  of the form $2^{\mathrm{polylog}(n)}$, where $n$ is the size of the input.
The \emph{Exponential Time Hypothesis (ETH)}~\cite{IMPAGLIAZZO2001367}
is a complexity theoretic assumption defined as follows. For $k\geq3$,
let
\[s_k=\inf\{\lambda \colon \text{there is an } O(2^{\lambda n})
\text{-time algorithm to solve $k$-SAT}\}.\]
The ETH states that for $k \geq 3$, $s_k >0$.  Hence, for example,
there is no quasi-polynomial time algorithm that solves 3-SAT.
So, finding a problem that can be solved in quasi-polynomial time and 
is also NP-hard, would contradict the ETH.
In recent years, the ETH has become a standard assumption from which many conditional lower bounds have been proven~\cite{Cygan2015}.
Note that, in addition to violating the ETH, the existence of an
NP-hard problem which can be solved in quasi-polynomial time would
also directly imply that \emph{nondeterministic exponential time
  (NEXP)} coincides with \emph{deterministic exponential time (EXP)}
(which can be proven by a padding argument similar
to~\cite[Proposition~2]{BuhrmanH92}).  Thus, having such an algorithm
for a problem implies that it is extremely unlikely for that problem
to be NP-hard.

\subsection{Contribution}

We first consider outer $k$-planar graphs; see Section~\ref{sec:o-kp}.
We show that the largest outer $k$-planar {\em complete} graph has  
$(\lfloor\sqrt{4k+1}\rfloor+2)$ vertices.
Further, we show that every outer $k$-planar graph is
$\lfloor3.5\sqrt{k}\rfloor$-degenerate and hence has chromatic number
at most $\lfloor3.5\sqrt{k}\rfloor + 1$.
Next we show that every outer $k$-planar graph has 
separation number at most $2k+3$.  While this does not improve the
current best bound~\cite{fgkow-btokp-GD24} of $k+2$ on the separator
of outer $k$-planar graphs, we use our separator construction
to obtain a quasi-polynomial time algorithm to
test outer $k$-planarity (for fixed~$k$), which means that
the recognition problem is not NP-hard unless ETH fails.
Very recently, this result has been improved by Kobayashi, Okada, and
Wolff~\cite{kow-r2lok-SoCG25}, who showed that the recognition problem
can be solved in $2^{O(k \log k)}n^{3k+O(1)}$ time, that is, it lies
in the class XP (slicewise polynomial).  They also showed that the
problem is XNLP-hard.  This implies that it is W[$t$]-hard for
every~$t$ and that it is unlikely that it admits a fixed-parameter
algorithm, that is, an algorithm running in $f(k) \cdot n^{O(1)}$ time
for some computable function~$f$.

Then we show that the class of outer quasi-planar graphs and the
class of planar graphs are incomparable; see Section~\ref{sec:o-kqp}.

Finally, we restrict outer $k$-planar and outer $k$-quasi-planar drawings
to \emph{full} drawings, i.e., drawings where no crossing appears on the boundary of
the outer face, and to \emph{closed} drawings, i.e., drawings where the vertex
sequence on the boundary of the outer face is a
cycle in the graph; see Section~\ref{sec:full}.
(Note that every closed drawing is a full drawing.)
The case of full outer 2-planar graphs has been considered by Hong and Nagamochi~\cite{hn-ltatfo2p-DAM19}
who showed that full outer 2-planarity testing can be performed in linear time.
They observed that a graph is full outer $2$-planar if and only if its
maximal biconnected components are closed outer $2$-planar. 
We generalize this observation to $k$-planar and $k$-quasi-planar
graphs, that is, a graph is full outer $k$-planar (full
$k$-quasi-planar) if and only if its maximal biconnected components
are closed outer $k$-planar (full $k$-quasi-planar).
Then, for each $k$, we express \emph{closed outer $k$-planarity}
(and \emph{closed outer $k$-quasi-planarity}) in
\emph{extended monadic second-order logic}.
Thus, since outer $k$-planar graphs have bounded treewidth,
full outer $k$-planar graphs can be recognized in
$O(f(k)\cdot n)$ time, for a computable function~$f$.  In other words,
this recognition problem {\em is} fixed-parameter tractable.
We note that this result greatly generalizes the work of Hong and
Nagamochi~\cite{hn-ltatfo2p-DAM19}.  Our general approach via
Courcelle's theorem is similar to that of Bannister and
Eppstein~\cite{be-cm1p2pdgbtw-JGAA18} for computing the convex
crossing number of a graph.

\section{Outer {\em k}-Planar Graphs}
\label{sec:o-kp}

In this section, we study the structural properties of outer
$k$-planar graphs such as degeneracy and separation number.
Based on these structural properties, we obtain bounds on the colorability  
of outer $k$-planar graphs as well as a quasi-polynomial-time
recognition algorithm.

\subsection{Degeneracy}

First, we focus on the degeneracy of outer $k$-planar {\em complete} graphs.
We show that the largest outer $k$-planar complete graph has 
${\lfloor\sqrt{4k+1}\rfloor + 2}$ vertices;
see Lemma~\ref{obs:largest_clique}. This implies that there are
outer $k$-planar graphs whose minimum degree is $\lfloor\sqrt{4k+1}\rfloor + 1$.
We then bound the degeneracy of general outer $k$-planar graphs
by $\lfloor 3.5\sqrt{k} \rfloor$ from above; see Theorem~\ref{thm:maxmindeg}. 

\begin{lemma}
  \label{obs:largest_clique}
  For every $k \ge 0$, the largest outer $k$-planar complete graph has
  $\lfloor\sqrt{4k+1}\rfloor + 2$ vertices. 
\end{lemma}

\begin{proof}
  The largest outerplanar complete graph is~$K_3$, so the statement is
  correct for $k=0$.

  Now let $k \ge 1$ and
  consider an outer $k$-planar drawing of~$K_n$ for some $n \ge 1$.
  Let $e$ be an edge that splits the complete graph
  so that there are $(n-2)/2$ many vertices on both sides if $n$ is
  even, or $(n-1)/2$ on one side and $(n-3)/2$ on the other if $n$ is
  odd.  Then the edge $e$ has the largest number of crossings among
  all the edges in the drawing, namely $(n-2)^2/4=(n^2-4n+4)/4$ if $n$
  is even and $(n-3)(n-1)/4=(n^2-4n+3)/4$ if $n$ is odd.
  Taking into account the fact that no
  edge is crossed more than $k$ times, we obtain that
  $n \le \lfloor 2\sqrt{k} \rfloor +2$ if $n$ is even and
  $n \le \lfloor \sqrt{4k+1}\rfloor + 2$ if $n$ is odd.

  On the other hand, let $n = \lfloor 2\sqrt{k} \rfloor +2$.
  Place the vertices of~$K_n$ on the corners of a convex $n$-gon.
  Since $n \le \lfloor \sqrt{4k+1}\rfloor + 2$, it is clear that the
  resulting convex drawing is outer $k$-planar.  We claim that this is
  even true for $n = \lfloor \sqrt{4k+1}\rfloor + 2$.

  Note that the two bounds deviate only if $4k+1$ is a square number,
  say $\ell^2$.  Then $\lfloor \sqrt{4k+1}\rfloor = \ell$, whereas
  $\lfloor \sqrt{4k} \rfloor = \lfloor \sqrt{\ell^2-1} \rfloor = \ell -1$.
  In this case, however, $n=\lfloor \sqrt{4k+1}\rfloor + 2 = \ell+2$
  is odd, because $4k+1$ is odd.  Hence, the slightly larger bound
  applies, and $K_n$ is outer $k$-planar.
\end{proof}

Let $G$ be an outer $k$-planar graph. Consider some outer $k$-planar
drawing of~$G$.  Recall that vertices lie on a
circle and the edges are straight-line segments.
We say that an edge $ab$ \emph{splits off}~$l\in \mathbb{N}$ vertices of $G$ \emph{to one side} if
one of the open half-planes defined by the edge $ab$ contains exactly $l$ vertices
(not including $a$ and $b$). From the context it will be clear which of the two half-planes we mean.

The following theorem 
gives an upper bound on the degeneracy of 
outer $k$-planar graphs.

\begin{theorem}
  \label{thm:maxmindeg}
  For every positive integer~$k$, let $\delta_k$ be the degeneracy
  of outer $k$-planar graphs.  
  Then~${\delta_k \le \lfloor c_k\sqrt{k} \rfloor }$, where
  \[
    c_k = \frac{5 + 3\sqrt{1+8k}}{4\sqrt{k}}.
  \]
  The sequence $(c_k)_{k\ge1}$ is  monotonically 
  decreasing with $c_1
  = 3.5$ and limit $3\sqrt{2}/2$.
\end{theorem}
\begin{proof}
Let $G$ be an outer $k$-planar graph of minimum degree~$\delta_k$.
Consider some outer $k$-planar drawing of $G$.
Assume that there exists an edge~$e$ that
splits off $t\in \mathbb{N}$ vertices in the drawing of~$G$ to one side, 
then there are at least~${\delta_k t -  t(t-1) - 2t = 
\delta_k t - t(t+1)}$ edges crossing the edge~$e$ (on the left-hand side of the equality the second term 
stands for the sum of the degrees of a clique on $t$ vertices and the third term for the number of edges incident to the endpoints of $e$ and to the $t$ many vertices). 
Because~$G$ is outer $k$-planar, we have that
\begin{equation}
  \label{eq:quadratic}
  \delta_k t - t(t+1) \le k.
\end{equation}
Therefore, either $t \le t_1$ or $t_2 \le t$, where
\[t_1 = \left((\delta_k-1)-\sqrt{(\delta_k - 1)^2 - 4k}\right)/2
  \quad\text{and}\quad
  t_2 = \left((\delta_k-1)+\sqrt{(\delta_k - 1)^2 - 4k}\right)/2.\]
Assume for a contradiction that $\delta_k \ge c\sqrt{k}$ for some
$c>c_k$, where $c_k$ is as stated in the theorem.
Then the solutions $t_1$ and $t_2$ ($t_1<t_2$) to the quadratic 
equation~\eqref{eq:quadratic} exist and are
distinct, because $c_k > 1/\sqrt{k} + 2$,
and so, $\delta_k > c_k \sqrt{k} \ge 2\sqrt{k} + 1$ and the
discriminant of~\eqref{eq:quadratic} is positive.
Call an edge that splits off 
at least~$t_2$ vertices to both sides \emph{long}. 
The number~$\ell$ of long edges incident to each vertex
can be bounded from below by subtracting from $\delta^*$ the upper
bound on the number of incident edges that are not long. 
Recall that those can split off at most $t_1$ vertices, so we can have
at most $2(t_1+1)$ such edges incident to the same vertex. 
Thus we can bound $\ell$ from below as follows:
\begin{align*}
  \ell &\ge \delta_k - 2(t_1+1) \\
       &= \delta_k - 2\left(\left((\delta_k-1)-\sqrt{(\delta_k-1)^2
         - 4k}\right)/2 + 1\right) \\
       &=   1 + \sqrt{(\delta_k - 1)^2 - 4k} - 2
         \ge \sqrt{(c^2-4)k - 2c\sqrt{k}+1} - 1. 
\end{align*}
Take a shortest long edge~$e'$ in the outer $k$-planar drawing of $G$, that is, in one of the open half-planes $h$ defined by~$e'$, there is no long edge 
that is completely contained in $h$. 
Let $V_h$ be the set of vertices of the graph that are contained in the half-plane $h$. 
Because~$e'$ is a long edge, $|V_h| \ge t_2 = \left((\delta_k-1)+\sqrt{(\delta_k - 1)^2 - 4k}\right)/2$. 
Because it is a shortest long edge,
all the long edges
incident to the vertices in $V_h$ must cross $e'$.  Therefore, 
the number of long edges that cross~$e'$ is at least 
\begin{align*}
\ell t_2 \ge  &\frac{1}{2}\left(\sqrt{(c^2-4)k - 2c\sqrt{k}+1} - 1\right)\left((c\sqrt{k}-1)+\sqrt{(c^2-4)k - 2c\sqrt{k}+1}\right)
\intertext{Let}
f(c, k)  = &\frac{1}{2}\left(\sqrt{(c^2-4)k - 2c\sqrt{k}+1} - 1\right)\left((c\sqrt{k}-1)+\sqrt{(c^2-4)k - 2c\sqrt{k}+1}\right) - k.
\end{align*}
Consider the equation $f(c, k) = 0$, where $k \in \mathbb{N}$
and $c \in (1/\sqrt{k} + 2, \infty)$.

In order to find the roots of the above equation, we
first observe that
\[(c^2-4)k - 2c\sqrt{k}+1 = (c\sqrt{k}-1)^2 - 4k.\]
Therefore, $f(c, k)  =$
\[
  \frac{1}{2}\left(
    (c\sqrt{k}-1)^2 - 4k  + (c\sqrt{k}-1)\sqrt{(c\sqrt{k}-1)^2 - 4k} -
    \sqrt{(c\sqrt{k}-1)^2 - 4k} - (c\sqrt{k}-1)
  \right) - k.
\]
Let $X = c\sqrt{k}-1$, then the equation $f(c, k) = 0$ becomes
\begin{equation*}
  X^2 - 4k + X\sqrt{X^2 -4k} - \sqrt{X^2 -4k} - X - 2k = 0
\end{equation*}
or 
\begin{equation*}
  X^2 -X -6k =  (1-X) \sqrt{X^2 -4k}
\end{equation*}
let us square it
\begin{equation*}
  X^4 + X^2 + 36k^2  - 2X^3  - 12X^2k + 12 Xk = 
  (1-x)^2(X^2 - 4k)
\end{equation*}
after expanding we have
\begin{equation*}
   -8X^2k +4Xk +36k^2 + 4k = 0
\end{equation*}
or
\begin{equation*}
   X^2 - \frac 12X - \frac{9k+1}{2} = 0
\end{equation*}
the positive root 
\begin{equation*}
   X = \frac{1+3\sqrt{1+8k}}{4}.
\end{equation*}
Then, due to our definition of $X$, we have
\begin{equation*}
  c = \frac{X+1}{\sqrt{k}} = \frac{5 + 3\sqrt{1+8k}}{4\sqrt{k}}.
\end{equation*}

Note that the right-hand side is actually $c_k$.  This contradicts our
assumption that $c>c_k$.  Thus, for every $k \in \mathbb{N}$, the
degeneracy of any outer $k$-planar graph is at most
$c_k\sqrt{k}$.  Observe that $(c_k)_{k\ge1}$ is a monotonically
decreasing sequence with $c_1 = 3.5$ and limit $3\sqrt{2}/2$.
\end{proof}
Lemma~\ref{obs:largest_clique} implies that there are (complete) outer
$k$-planar graphs with degeneracy $\lfloor 2\sqrt{k} \rfloor +1$.
Note that, for $k<42$, our upper bound
$\lfloor (5 + 3\sqrt{1+8k})/4 \rfloor \approx 2.13 \sqrt{k}$ from
Theorem~\ref{thm:maxmindeg} differs only by~1 from the existential
lower bound provided by Lemma~\ref{obs:largest_clique}.

As a direct consequence of Theorem~\ref{thm:maxmindeg}, we obtain the
following.

\begin{corollary}
\label{cor:edge-density}
Every outer $k$-planar graph has at 
most $\lfloor c_k\sqrt{k} \rfloor n$ edges, where $n$ is the number of vertices. 
\end{corollary}

For small $k$ (recall that $c_1 = 3.5$) there is a stronger (and more
general) bound by Aichholzer, Obenaus, Orthaber, Paul, Schnider,
Steiner, Taubner, and Vogtenhuber~\cite{aoopsstv-epocgg-SoCG22} of
roughly $2.46\sqrt{k}n$ for the maximum edge density of outer
$k$-planar graphs.  However, as $k$ tends to infinity, our upper bound
$(5 + 3\sqrt{1+8k})/4 \cdot n$ tends to
$(3\sqrt{2}/2)\sqrt{k}n \approx 2.13\sqrt{k}n$.  Hence, our bound is
less than the bound by Aichholzer et al.\ for $k \ge 15$ (ignoring the
floors).

By combining Lemma~\ref{obs:largest_clique} and Theorem~\ref{thm:maxmindeg}, we obtain the following. 
\begin{corollary}
\label{thm:coloring}
 Every outer $k$-planar graph can be colored with 
 $\lfloor c_k\sqrt{k}
\rfloor +1$ colors. 
 There exist outer $k$-planar graphs that need at least $\lfloor\sqrt{4k+1}\rfloor + 2$ colors.
\end{corollary}

\subsection{Quasi-Polynomial-Time Recognition via Balanced Separators}
\label{sec:separators}

We now show that outer $k$-planar graphs have separation number at
most $2k+3$~(Theorem~\ref{thm:separator}).  Via a result of
Dvo\v{r}\'ak and Norin~\cite{dn-tgbs-JCTS19}, this implies that their
treewidth is at most $30k+45$.  However, a result of Wood and
Telle~\cite[Proposition~8.5]{Wood2007} implies that every outer
$k$-planar graph has treewidth at most $3k+11$, which is a better
bound than what we get by applying the result of Dvo\v{r}\'ak and
Norin to our separators.  The treewidth bound of $3k+11$ in turn
implies a separation number of $3k+12$, but our bound is better.
Our separators also allow us to test outer $k$-planarity in
quasi-polynomial time; see Theorem~\ref{thm:recognition-algorithm}.

Very recently, both the result of Wood and Telle and our result have
been improved by Firman, Gutowski, Kryven, Okada, and
Wolff~\cite{fgkow-btokp-GD24}, who showed that outer $k$-planar graphs
have treewidth at most $\lfloor 1.5k \rfloor + 2$ and separation
number at most $k+2$.
They also show that $k+2$ is an existential lower bound for both
numbers.

\begin{theorem}
  \label{thm:separator}
  Each outer $k$-planar graph has separation number at most $2k+3$.
\end{theorem}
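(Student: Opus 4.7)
The plan is to exhibit a balanced separator of size at most $2k+3$ by choosing a chord of the convex hull whose endpoints, together with appropriately chosen endpoints of its crossing edges, form the separator. Label the vertices $v_1,\dots,v_n$ clockwise on the hull. For any chord $c=v_av_b$, let $L$ and $R$ be the two arcs of $V\setminus\{v_a,v_b\}$, and define
\[
X(c)=\{v_a,v_b\}\cup\{u\in L : u \text{ is an endpoint of some edge of }G\text{ crossing }c\}.
\]
Then $(L\cup X(c),\,R\cup X(c))$ is a valid separation because any edge with one endpoint in $L\setminus X(c)$ and the other in $R\setminus X(c)$ would have to cross $c$ and therefore contribute its $L$-endpoint to $X(c)$. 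Choosing $c$ with $|L|,|R|\le\lceil(n-2)/2\rceil$ gives balance automatically; the task reduces to selecting a chord $c$ such that the $L$-endpoints of edges crossing $c$ form a set of size at most $2k+1$, whence $|X(c)|\le 2k+3$.

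To achieve this bound, I would pick $c$ via an extremal argument. Among the edges of $G$ crossing a given balanced chord $c$, let $e^*=u^*w^*$ be the one whose $L$-endpoint lies closest to $v_a$, and let $e^{**}=u^{**}w^{**}$ be the one whose $L$-endpoint lies closest to $v_b$. Since $e^*,e^{**}\in E(G)$, outer $k$-planarity bounds their crossings in the whole graph by $k$ each; in particular, at most $k$ other edges of $G$ crossing $c$ also cross $e^*$, and at most $k$ also cross $e^{**}$. Counting the $L$-endpoints of all crossings of $c$, we get $u^*$, $u^{**}$, the at most $k$ $L$-endpoints of edges that cross both $c$ and $e^*$, and the at most $k$ $L$-endpoints of edges that cross both $c$ and $e^{**}$, giving $2+2k$ --- one more than needed. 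The extra $1$ would be removed by observing that, because both $e^*$ and $e^{**}$ cross $c$, they are themselves related either by crossing or by nesting, and this forces an overlap in the counted crossings.

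The main obstacle is crossings of $c$ that are \emph{nested} inside both $e^*$ and $e^{**}$ and thus cross neither: such edges escape the extremal bookkeeping. To handle them, one iterates the extremal-edge construction within the sub-region bounded by $e^*$, $e^{**}$, and the two arcs of $c$. Closing this recursion at exactly the $2k+1$ budget --- rather than accumulating $O(k)$ additional endpoints per level --- requires a charging argument in the spirit of the proof of Lemma~\ref{lem:maxmindeg}: one shows that if too many crossings remain unaccounted for after the first extremal step, then some edge of $G$ is crossed by more than $k$ others, contradicting outer $k$-planarity.
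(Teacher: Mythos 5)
There is a genuine gap, and it sits exactly where you flag it: the nested crossings. Your reduction asks for a chord $c$ such that the $L$-endpoints of the edges of $G$ crossing $c$ number at most $2k+1$, but for a \emph{balanced} chord this can fail arbitrarily badly. Take $k=0$ and vertices in convex order $v_a,u_1,\dots,u_m,v_b,w_m,\dots,w_1$ with edges $u_iw_i$: these $m$ edges are pairwise non-crossing (each is nested inside the previous one), so the graph is outerplanar, yet every one of them crosses the balanced chord $v_av_b$ and contributes a distinct $L$-endpoint. Your extremal edges $e^*,e^{**}$ see none of these middle edges, and iterating the construction inside the region they bound does not help: each level of the recursion can again contain a full nest, so no charging argument can cap the total at $2k+1$ --- the quantity you are trying to bound is simply not bounded by any function of $k$ for a fixed balanced chord. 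The proposal as written therefore cannot be completed along these lines.

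The idea you are missing is that one must \emph{move the cut}, not refine the count. The paper also starts from a halving chord $ab$ and also singles out two extremal crossing edges (the first edge crossing $ab$ encountered clockwise from $b$, and the first encountered clockwise from $a$), but it then replaces $ab$ by a new chord spanned by endpoints of those two graph edges (after first tightening them to a ``close'' parallel pair). Every edge crossing the new chord is forced to cross one of the two chosen \emph{graph} edges, and since each graph edge is crossed at most $k$ times, the new chord is crossed at most $2k$ times; its two endpoints plus one endpoint per crossing edge give the $2k+2$ or $2k+3$ bound. The nested edges that defeat your count simply do not cross the relocated chord. The price is that the new chord is no longer exactly halving, which is why the paper relaxes balance to the $[\frac n3,\frac{2n}{3}]$ window, handles the case where some graph edge already cuts that many vertices separately (giving a $k+2$ separator outright), and runs a case analysis on the arc lengths $\alpha,\beta,\gamma,\delta$ to verify the relocated cut is still balanced. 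Both the relocation step and the balance verification are essential and absent from your argument.
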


\begin{proof}
Consider an outer $k$-planar drawing.
If the graph contains an edge~$e$ that splits off at least
$n/3$ and at most $2n/3$
vertices to one side, we can use~$e$ to obtain a balanced
separator of size at most $k+2$: take the endpoints of~$e$ 
and a vertex cover of the edges that cross~$e$.
Thus, in the following we assume that no such edge exists.

Consider a pair of vertices $a$ and $b$ such that the \emph{line} through $ab$  divides
the drawing into left and right sides having an almost equal number of
vertices (with a difference of at most one).
If the edges which cross the line $ab$ also mutually cross each other,
there can be at most $k$ of them.
Thus, we again have a balanced separator of size at most $k+2$.
So assume now that there is a pair of edges
that cross the line~$ab$, but do not cross each other. We call such a pair of edges \emph{parallel}.
We now pick a pair of parallel edges in a specific way.
Starting from~$b$, let $b'$ be the first vertex along 
the boundary in clockwise direction
such that there is an edge $b'b''$ that crosses the line $ab$.
Symmetrically, starting from~$a$, let $a'$ be the first vertex along the
boundary in clockwise direction
such that there is an edge $a'a''$ that crosses the line $ab$;
see Fig.~\ref{fig:aabb}.
Note that the edges $a'a''$ and~$b'b''$ are either identical or 
parallel.
In the former case, we see that all other edges crossing the line $ab$ 
must also cross the 
edge $a'a'' = b'b''$, and hence there are at most $k$ 
edges crossing the line~$ab$.
In the latter case, there are two subcases that we treat below.

For two vertices $u$ and $v$, let \emph{$[u,v]$} be the set of
vertices that starts with $u$ and, going clockwise, ends with~$v$.
Let \emph{$(u,v)$} $=[u,v] \setminus \{u, v\}$.  When we say that an
edge~$uv$ splits off $\alpha$ vertices without specifying the
side, we mean that $|[u,v]| = \alpha$.  Note that this
implies that the edge~$vu$ splits off $n-\alpha+2$ vertices.

\begin{figure}[htb]
    \begin{subfigure}[b]{0.33\textwidth}
      \centering
      \includegraphics[page=1]{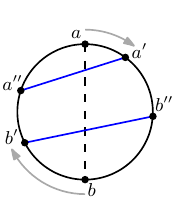}
      \caption{edges $a'a''$ and
      $b'b''$ are parallel}
      \label{fig:aabb}
    \end{subfigure}
    \hfill
    \begin{subfigure}[b]{0.26\textwidth}
      \centering
      \includegraphics[page=2]{okptw}
      \caption{case~1}
      \label{fig:case1}
    \end{subfigure}
    \hfill
    \begin{subfigure}[b]{0.3\textwidth}
      \centering
      \includegraphics[page=3]{okptw}
      \caption{case~2}
      \label{fig:case2}
    \end{subfigure}

    \caption{Illustration for the proof of Theorem~\ref{thm:separator}}
    \label{fig:bs}
\end{figure}

Because we assumed that there is no edge that splits off at least
$n/3$ and at most $2n/3$ vertices, our case distinction below is
exhaustive.

\smallskip
\noindent{\bf Case 1.} {\it The edge $b'b''$ splits off
  at most $\mu \le n/3$ vertices (see Fig.~\ref{fig:case1}).}

\noindent
In this case, $n/3 \le |[b,b']| \le n/2$ or
$n/3 \le |[b'',b]| \le n/2$.  We claim that neither the line $bb'$ nor
the line $bb''$ is crossed more than $k$ times.
Namely, due to the choice of~$b'$, each edge that crosses
the line~$bb'$ also crosses the edge~$b'b''$.
Also due to the choice of~$b'$, each edge that crosses the
line~$bb''$ also crosses the edge~$b'b''$.
Thus, we have a separator of size at most $k+2$, regardless of whether we
choose $bb'$ or $bb''$ to separate the graph.  As we observed above,
one of them is balanced.

\smallskip
\noindent{\bf Case 1$'$.} {\it The edge $a'a''$ splits off at most $
  n/3$ vertices.} \\
This is symmetric to case~1.

\smallskip
\noindent{\bf Case 2.} {\it The edge $a'a''$ and the edge $b'b''$
  both split off at most $ n/3$ vertices (see Fig.~\ref{fig:case2}).}

\noindent We show that we can even find such a pair of edges 
$a'a''$ and $b'b''$ with the additional requirement that there is no
edge $cd$ between them that is parallel to them, that is,
$c \in [a',b'']$ and $d \in [b',a'']$ and $cd \not\in \{a'a'',b'b''\}$.
We call such a pair \emph{close}.
If there is an edge $e = uv$ between $a''a'$ and
$b'b''$, we form a new pair by using
$uv$ and $a'a''$ if $uv$ splits off at most $n/3$ vertices or by using
$uv$ and $b'b''$ if $vu$ splits off at most $n/3$ vertices.  Note that
one of the two conditions must hold due to our assumption that there
is no edge that splits off at least $n/3$ and at most $2n/3$ vertices.
By repeating this procedure, we always find a close pair.
Hence, we can assume that $a'a''$ and $b'b''$ actually form a close pair.
Let $\alpha=|(a'',a')|$, $\beta=|(b'', b')|$, $\gamma=|(a',b'')|$, 
and $\delta=|(b',a'')|$; see Fig.~\ref{fig:case2}.
Note that $n-4 \le \alpha+\beta+\gamma+\delta \le n-3$.

We further consider two cases depending on whether 
the two edges $a''a'$ and $b''b'$ share an endpoint or not.

First, suppose that $a' = b''$ or $a'' = b'$
(that is, $\gamma=0$ or $\delta=0$).
We can now use both edges $a''a'$ and
$b''b'$ (together with any edges crossing them) to obtain a separator
of size at most $2k+3$.  The separator is balanced since $\alpha+\beta
\le {2n}/{3}$ and $\gamma+\delta \le {n}/{2}$.
The latter holds due to the fact that $\gamma=0$ or $\delta=0$
and that $\gamma \le n/2$ and $\delta \le n/2$ since, on each side
of the line~$ab$, there are at most $n/2$ vertices.

Second, assume that $a'$, $a''$, $b'$, $b''$ are all distinct.
Note that we again have that $\gamma \le n/2$ and $\delta \le n/2$.
We separate the graph along the line $a'b'$; see Fig.~\ref{fig:case2}.
Namely, all the edges that cross this line
must also cross the edge~$a''a'$ or the edge~$b''b'$.
Therefore, we obtain a separator of size at most $2k+2$.

To see that the separator is balanced, we consider two cases.
If $\delta \geq  {n}/{3}$ (or $\gamma \geq  {n}/{3}$), 
then $\alpha+\beta+\gamma \le {2n}/{3}$ (or $\alpha+\beta+\delta 
\le {2n}/{3}$).
Otherwise $\delta< n/3$ and $\gamma < {n}/{3}$.
In this case $\delta+\alpha \le {2n}/{3}$ and $\gamma+\beta \le 
{2n}/{3}$.  In both cases the separator is balanced.
\end{proof}

Let $G$ be an outer $k$-planar graph and consider some
outer $k$-planar drawing $D$ of $G$.  According to
Theorem~\ref{thm:separator}, $G$ has a
balanced separator of size at most $2k+3$ 
that arises from $D$ as described in Theorem~\ref{thm:separator}.
Furthermore, we observe the following structure of the separator that
follows from the proof of Theorem~\ref{thm:separator}.

\begin{observation}
  \label{obs:separator-types}
  Each outer $k$-planar graph with an outer $k$-planar drawing~$D$ has
  a balanced separator of size at most $2k+3$ that is of one of the
  following two types:
  \begin{enumerate}[(1)]
  \item three vertices $x$, $y$, and $z$ that are endpoints of two
    close parallel edges~$xy$ and~$yz$ and some endpoints of the
    edges crossing the edges~$xy$ and~$yz$ in~$D$ and
  \item two vertices $x$ and $y$
    and some endpoints of the edges crossing the line~$xy$ in~$D$.
  \end{enumerate}
\end{observation}

We call the vertices $x, y$, and $z$ \emph{boundary vertices}. 

Next we show that we can recognize
whether a given graph $G$ is
outer $k$-planar in quasi-polynomial time; 
see Theorem~\ref{thm:recognition-algorithm}.
Our proof relies on Theorem~\ref{thm:separator}; in particular, 
it uses the structure of balanced separators
as described in Observation~\ref{obs:separator-types}. 

We need the following notation.  Given an outer $k$-planar drawing~$D$
of~$G$, let~$S$ be a separator constructed as in the proof in
Theorem~\ref{thm:separator}, and let $S'\subset S$ be the set of
boundary vertices.
We call each arc of the circle of 
the drawing~$D$ that connects two consecutive 
vertices of~$S'$ a \emph{region} with respect to~$S'$.  We 
call the vertices in $S\setminus S'$ \emph{regional} vertices.  Note
that if the separator is of type~(1), then there are three regions, and
if it is of type~(2), then there are two regions.

\begin{lemma}
  \label{lem:components}
  For each connected component of $G\setminus S$, its vertices are
  only in one region of~$D$ with respect to~$S'$.
\end{lemma}
\begin{proof}
  Assume for a contradiction that there is a connected component with
  an edge $e$ with endpoints in two different regions.  Then $e$
  crosses a line between two boundary vertices in~$D$.  Therefore, $e$
  is a separator edge and at least one of its endpoints lies in~$S$.
\end{proof}
We also have the following.
\begin{lemma}
  \label{lem:three-vertices}
  If $S$ is a type-(1) separator, then no connected component of
  $G\setminus S$ is adjacent to all three boundary vertices.
\end{lemma}
\begin{proof}
  Assume for a contradiction that there is a connected component of
  $G\setminus S$ connected by edges in
  $E(G) \setminus E(G\setminus S)$ to each of the boundary
  vertices~$x$, $y$, and~$z$.
  If the component lies in the region bounded by $x$ and $z$,
  then the edges $xy$ and $yz$ are not a close parallel pair in~$D$.
  This yields the desired contradiction.  If the component lies in one
  of the regions bounded by~$x$ and~$y$ or by~$y$ and~$z$, an edge~$e$
  of the component crosses either the line~$xy$ or the line~$yz$
  in~$D$.  Again, this yields the desired contradiction since then $e$
  is one of the separator edges.
\end{proof}

In the remainder of this section, we refer to the connected components
of $G\setminus S$ simply as \emph{components}.

In order to obtain a quasi-polynomial algorithm for recognizing outer
$k$-planar graphs, we leverage the structure of balanced separators 
as described in the proof of Theorem~\ref{thm:separator}.
Specifically, we enumerate the vertex sets that could form such a
separator.  For each set, we choose an appropriate outer $k$-planar
drawing of the subgraph it induces
and partition the remaining parts of the graph into 
regions bounded by the boundary vertices of the separator.
We then recursively test whether the graph 
induced by the vertices in each of the 
regions admits an outer $k$-planar drawing.

\begin{theorem}
  \label{thm:recognition-algorithm}
  For every fixed $k$, testing whether a given graph is outer
  $k$-planar takes $O(2^{\mathrm{polylog}(n)})$ time, where $n$ is the
  number of vertices of the given graph.
\end{theorem}

\begin{proof}
  To obtain quasi-polynomial runtime, we need to limit the
  number of regions on which we branch.
Let $G$ be the given graph.
According to Theorem~\ref{thm:separator}, 
if $G$ is outer $k$-planar, 
then there exists a set $S$
with at most $4k+3$ vertices 
where the vertices are endpoints of the
edges of a balanced separator. 
Note that we are interested in all the endpoints of the edges, not just those that form the separator.
In addition, note that $S$ is also a balanced separator. 
Furthermore, $S$ is of one of the two types described
in Observation~\ref{obs:separator-types}; see
Fig.~\ref{fig:separator-cases-components-type-1} for a type-(1)
separator and Fig.~\ref{fig:separator-cases-components-type-2} for a
type-(2) separator.
We find $S$ by brute force, enumerating all vertex sets of size at most $4k + 3$.  
Once $S$ is fixed, observe that the subgraph~$G[S]$ induced by~$S$ 
admits only $f(k)$ possible outer $k$-planar drawings, for some function $f$.  
We then enumerate all such possible drawings.
Assume now that one such drawing of~$S$ is fixed.

\begin{figure}[bt]

  \hfill
  \begin{subfigure}[b]{0.30\textwidth}
    \centering
    \includegraphics[page=1]{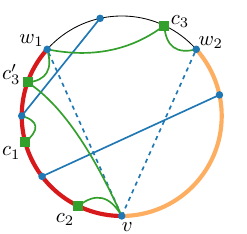}
    \caption{a separator of type (1)}
    \label{fig:separator-cases-components-type-1}
  \end{subfigure}
  \hfill
  \begin{subfigure}[b]{0.30\textwidth}
    \centering
    \includegraphics[page=2]{separator-cases-components}
    \caption{a separator of type (2)}
    \label{fig:separator-cases-components-type-2}
  \end{subfigure}
  \hfill\null
  
  \caption{The two types of separators according to
    Observation~\ref{obs:separator-types}.}
  \label{fig:separator-cases-components}
\end{figure}

We now consider two cases depending on the type of the separator $S$.
If $S$ is of type~(1), then 
$S$ contains three boundary vertices~$v$, $w_1$,
and~$w_2$; see Fig.~\ref{fig:separator-cases-components-type-1}.
If $S$ is of type~(2), then 
there are two boundary vertices $v$ and
$w$; see Fig.~\ref{fig:separator-cases-components-type-2}.
Note that, since we have a fixed drawing of~$G[S]$, the regional vertices
are partitioned into regions as defined by the boundary vertices of~$S$.
According to Lemma~\ref{lem:components}, no component of
$G \setminus S$ is connected to regional vertices of different
regions.

\subparagraph*{Separator of type~(1)}

First, assume that the separator~$S$ is of type~(1);
see Fig.~\ref{fig:separator-cases-components-type-1}.
We start by choosing the three boundary vertices~$v$, $w_1$,
and~$w_2$ from~$S$.
According to Lemma~\ref{lem:components}, if there is a component connected to regional vertices
of different regions, we can reject this
configuration.
Furthermore, according to Lemma~\ref{lem:three-vertices},
no component is adjacent to all three boundary vertices.
We now consider three possible different types~$c_1$,
$c_2$, and~$c_3$ of components depending on the type of the vertices
 (boundary or regional) that they are connected to; 
 see Fig.~\ref{fig:separator-cases-components-type-1}.


 Components of type $c_1$ are connected to (possibly many) 
 regional vertices of the same region and may be connected to boundary vertices as well.
 In any valid drawing, they will end up in the same region as their regional vertices.

Components of type $c_2$ are not connected to any regional vertices;
instead, they are connected to exactly one of the boundary vertices.
 Since they are not connected to regional vertices, 
 they cannot interfere with other parts of the drawing, 
 so we can arbitrarily assign them to a region adjacent
 to the boundary vertex they are connected to.

Finally, there are components that are connected to two boundary vertices.
There are two types of such components.

The first type (which we call~$c_3$) 
consists of those components that are connected to~$w_1$ and~$w_2$.
These have to be placed in the region bounded by~$w_1$ and~$w_2$.
Otherwise one of the edges connecting such a component would cross the
line~$vw_1$ or the line~$vw_2$, and thus, it would be a separator edge
and therefore it could not be part of the component.

The second type (which we call~$c_3'$) consists of those components
that are either connected to~$v$ and~$w_1$ or to~$v$ and~$w_2$.
 These have to be placed in the region
 bounded by~$v$ and~$w_1$ or by~$v$ and~$w_2$, respectively, since
 otherwise (similarly as above)
 one of the edges connecting such a component would cross the
 line~$vw_1$ or the line~$vw_2$ and thus, it would be a separator
 edge and could not be part of the component.
Note that a component connected to~$v$ and~$w_1$ or to~$v$ and~$w_2$
 cannot be placed in the region between $w_1$ and $w_2$
since this would contradict the fact that
our separator arose from a pair of close parallel edges as argued 
in the proof of Theorem~\ref{thm:separator}. 
 
The above discussion implies the following for our recognition
algorithm.  If, in a fixed configuration (i.e., set~$S$, drawing
of~$G[S]$, and triplet of boundary vertices), the drawing of~$G[S]$
corresponds to a separator of type~(1), then we can reject the
current configuration (based on having components that do not
correspond to any of the types~$c_1$, $c_2$, $c_3$ or~$c_3'$)~--
or every component of $G \setminus S$ is either attached to exactly
one boundary vertex or it has a well-defined placement into a region
defined by its adjacent boundary vertices.  For each component that is
attached to exactly one boundary vertex, it suffices to recursively
produce a drawing of that component together with its boundary vertex
and to place this drawing next to the boundary vertex.
We place the other components into their regions and recurse on the regions.
This covers all cases for separators of type~(1).

\subparagraph*{Separator of type~(2)}

Second, assume that the separator $S$ is of type~(2); see
 Fig.~\ref{fig:separator-cases-components-type-2}.
Note that we now have two boundary vertices $v$ and $w$;
thus we have only two regions.
Similarly as above, we now consider three possible different
types~$c_1$, $c_2$, and~$c_4$ of components depending on the type of
vertices (boundary or regional) that they are connected to; 
see Fig.~\ref{fig:separator-cases-components-type-2}.

Components $c_1$ and $c_2$ are defined exactly as they were defined
for separators of type~(1).
We handle these components as described above.

A separator of type~(2) may, however, also create components that are
connected to both boundary vertices~$v$ and~$w$, but to no regional
vertices; we refer to these as components of type~$c_4$.
Such components can be placed in either of the two regions.

If $G[S]$ contains an edge $v'w'$ that crosses the line~$vw$ (as in
Fig.~\ref{fig:separator-cases-components-type-2}), then there cannot
be more than $k$ components of type~$c_4$.  Namely, in any drawing,
every type-$c_4$ component contains an edge that connects
the component to either~$v$ or~$w$ and that hence crosses~$v'w'$.
Thus, we enumerate all the different placements of these components
and recurse accordingly.

On the other hand, if $G[S]$ does not contain any edge that crosses
the line~$vw$, then the separator is exactly the pair $\{v,w\}$. 
In this case, there are no components of type~$c_1$.
Components of type~$c_2$ can be handled as before.
We now argue that we can have at most a function of $k$ many
different components of type~$c_4$ in a valid drawing.
 Consider the components of type~$c_4$.
In a valid drawing, each type-$c_4$ component defines an interval
of the region from its \emph{highest} vertex (that is, the vertex
closest to~$w$) that is adjacent to~$v$ or~$w$ to its \emph{lowest}
vertex (that is, the vertex closest to~$v$) that is adjacent to~$v$
or~$w$.
 Two such intervals relate in one of three ways:
 they overlap each other, they are disjoint, or one contains the other.
 We group components with either overlapping or disjoint intervals into \emph{layers}.
We depict this situation in Fig.~\ref{fig:separator-cases-components-c}
where, for simplicity, we draw only the highest and the lowest vertex
of each component, and we represent the path between them by a single
edge in the drawing. 

\begin{figure}[tb]
  \centering
  \includegraphics{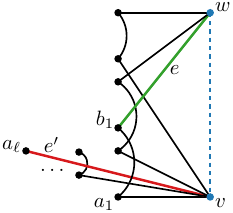}
  \caption{If the pair $\{v,w\}$ is a separator, we group the
    components of $G-\{v,w\}$ into layers.}
  \label{fig:separator-cases-components-c}
\end{figure}

Let $a_1b_1$ be the interval of the bottommost type-$c_4$ component
(i.e., going from~$v$ in clockwise direction, $a_1$ is the first
vertex adjacent to~$v$ or~$w$ in a type-$c_4$ component).
The first layer consists of this component and every component
whose interval either overlaps or is disjoint from the interval~$a_1b_1$.
We now consider the edge~$e \in \{b_1w, b_1v\}$ (whichever exists;
drawn green in Fig.~\ref{fig:separator-cases-components-c}).
For every component of the first layer whose interval is disjoint from
$a_1b_1$, the edge that connects the component to~$v$ crosses~$e$.
For every component of the first layer whose interval overlaps
$a_1b_1$, the edge~$e$ is crossed by at least one edge within that
component.  Hence, the first layer consists of at most $k+1$
components.  The next layer consists of all components whose 
intervals are contained in the intervals of components on the first
level etc.
Let~$\ell$ be the index of the deepest layer, let~$a_\ell$ be the
bottommost vertex of that layer connected to~$v$ or~$w$.
If~$e'$ is the edge that connects the component of~$a_\ell$ to~$v$
(red in Fig.~\ref{fig:separator-cases-components-c}), then~$e'$
is crossed by some edge of every layer above it.  Hence there are at
most $k+1$ levels and at most $(k+1)^2$ components per region.  This
means that we can enumerate their possible placements and recurse
accordingly.

We now bound the runtime of the above recursive algorithm.
Let $T(n)$ denote its worst-case runtime when applied to an outer
$k$-planar graph with $n$ vertices.  Then,
\[
  T(n) = \begin{cases}
          n^{O(k)} \cdot f(4k+3) \cdot g((k+1)^2) \cdot n^4 \cdot 
T({2n}/{3}) & \text{ for } n > 5k, \\
          f(n) & \text{ otherwise,}\\
         \end{cases}
\]
where $f(s)$ denotes the number of different outer $k$-planar drawings
of a graph with $s$ vertices and $g(s)$ denotes the number of ways $s$
components can be partitioned into at most three regions.  The factor
$n^{O(k)}$ stands for finding all possible separators of size $4k+3$,
and $T({2n}/{3})$ is the runtime of the recursive call on a region.

Thus, the algorithm runs in quasi-polynomial time, i.e., $T\in 2^{\mathrm{polylog}(n)}$.
\end{proof}

\section{Outer Quasi-Planar Graphs}
\label{sec:o-kqp}

In this section we consider outer quasi-planar graphs, that is, outer
3-quasi-planar graphs.  We describe some classes of graphs that are
outer quasi-planar and some classes of graphs that are not outer
quasi-planar.  In particular, we
show that there are planar graphs that are not outer quasi-planar.
Hence, the classes of planar graphs and outer quasi-planar graphs are
incomparable; see Theorem~\ref{thm:containment}.

A graph is \emph{sub-Hamiltonian} if it is a subgraph of a Hamiltonian
graph.  A \emph{planar $3$-tree} is either $K_4$ or a graph~$G$ formed
by \emph{stacking} a vertex into a smaller planar $3$-tree~$G'$, that
is, by inserting a vertex into a triangular face of~$G'$ and by
connecting the new vertex to the three vertices of the triangular face
of~$G'$.  The graph~$K_4$ is the only complete planar $3$-tree of
level~1.  For $i>1$, the \emph{complete planar $3$-tree of level~$i$}
is obtained from the complete planar $3$-tree of level~$i-1$ by
stacking a vertex into every triangular face of the complete tree of
level $i-1$.

\begin{proposition}
\label{prop:someoqp}
The following graphs are outer quasi-planar:
(a)~every sub-Hamiltonian planar graph;
(b)~$K_{p,q}$ for $p \le 4$ and $q \le 4$;
(c)~$K_{n}$ for $n \le 5$;
(d)~the complete planar $3$-tree of level at most~3;
(e)~grid graphs of any size.
\end{proposition}

\begin{proof}
  Clearly, every subgraph of an outer quasi-planar graph is outer
  quasi-planar.

  To see~(a), consider a planar drawing of the given graph~$G$.  It
  partitions the edge set of~$G$ into the set~$E_0$ edges that lie on
  the cycle, the set~$E_1$ of edges that lie in the interior of the
  cycle, and the set~$E_2$ of edges that lie in the exterior of the
  cycle.  Now place the vertices of~$G$ on a circle so that their
  order follows the Hamiltonian cycle.  The induced straight-line
  drawing of~$G$ is outer quasi-planar since the edges in~$E_0$ do not
  cross any edges, no two edges in~$E_1$ cross, and no two edges
  in~$E_2$ cross.  Hence, every set of three edges can have at most
  two crossings.

  For~(b) and~(c), note that the two drawings of~$K_5$ and~$K_{4,4}$
  in Fig.~\ref{fig:k5andk44} are outer quasi-planar.  (For $K_5$
  this is trivial, because $K_5$ does not contain three independent
  edges.)

\begin{figure}[htb]
    \hfill
    \begin{subfigure}{0.2\textwidth}
      \centering
      \includegraphics{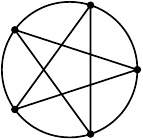}
      \caption{$K_5$}
    \end{subfigure}
    \hfill
    \begin{subfigure}{0.2\textwidth}
      \centering
      \includegraphics{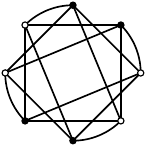}
      \caption{$K_{4,4}$}
    \end{subfigure}
    \hfill\null

    \caption{Drawings showing that $K_5$ and $K_{4,4}$ are outer
      quasi-planar.}
    \label{fig:k5andk44}
\end{figure}

  We have verified claim~(d) by constructing a SAT formulation; see
  Appendix~\ref{sec:oqpchecker}.

  Claim~(e) follows from the fact that
  grid graphs are sub-Hamiltonian.  Note that grid graphs of odd order
  are not Hamiltonian, but they are subgraphs of larger grids of even
  order, which are Hamiltonian. Therefore, all grids are
  sub-Hamiltonian.
\end{proof}

Below, we identify complete and complete bipartite graphs that are not
outer quasi-planar.  Furthermore, not all planar graphs are
outer quasi-planar, e.g., Fig.~\ref{fig:3tree}(a) shows a
planar 3-tree that is not outer quasi-planar (but removing any vertex
renders it outer quasi-planar).  This was verified using a SAT
formulation; see Appendix~\ref{sec:oqpchecker}.
The drawing of the graph in Fig.~\ref{fig:3tree}(b) was constructed
by removing the blue vertex and drawing the remaining graph in an outer
quasi-planar way.

\begin{proposition}
\label{prop:somenotoqp}
The following graphs are not outer quasi-planar:
(a)~$K_{p,q}$, for $p\ge3$ and $q \ge 5$;
(b)~$K_{n}$, for $n\ge6$;
(c)~every planar 3-tree with at least four complete levels.
\end{proposition}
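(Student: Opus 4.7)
The plan is to exhibit three pairwise crossing edges in any convex drawing of each graph in question. Recall that three chords in convex position mutually cross iff their six endpoints, read in cyclic order along the boundary, fully interleave, so that the three chords connect cyclically opposite positions of the chosen six-tuple. Since outer $3$-quasi-planarity is hereditary (the restriction of a valid convex drawing to any vertex subset is still valid), parts~(a) and~(b) reduce to showing that $K_{3,5}$ and $K_6$, respectively, are not outer $3$-quasi-planar.

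Part~(b) is immediate: label the six vertices of $K_6$ as $1,\dots,6$ along the boundary; the three edges $\{1,4\}$, $\{2,5\}$, $\{3,6\}$ are in $K_6$ and their endpoints fully interleave, so they pairwise cross.

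For part~(a), fix a convex drawing of $K_{3,5}$. Let $R_1,R_2,R_3$ be the three small-side (``red'') vertices in cyclic order, and let $(a_1,a_2,a_3)$ with $a_1+a_2+a_3=5$ count the large-side (``blue'') vertices in the three arcs they induce. I would case-split on $(a_1,a_2,a_3)$. If every $a_i\ge 1$, pick a blue $B_i$ from each arc: the edges $R_1B_2$, $R_2B_3$, $R_3B_1$ have endpoints in the fully interleaved cyclic order $R_1,B_1,R_2,B_2,R_3,B_3$ and hence pairwise cross. Otherwise, after rotation $a_3=0$ and $(a_1,a_2)\in\{(0,5),(1,4),(2,3)\}$; in each of these three sub-cases I would select six boundary positions (three reds and three blues) whose colors along the boundary form a pattern of shape $(c_1,c_2,c_3,\bar c_1,\bar c_2,\bar c_3)$, so that the three chords joining cyclically opposite selected positions are all red-blue (hence edges of $K_{3,5}$) and pairwise interleave. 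Concretely, skipping appropriate interior blues along the all-blue arc realises such a pattern in each sub-case, which can be checked by direct inspection.

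For part~(c), the natural route is to run the satisfiability-based check of Section~\ref{sec:oqpchecker} (the same technique used on the $23$-vertex graph in Proposition~\ref{prop:someoqp}(c) and Fig.~\ref{fig:3tree}(a)) on the $4$-level planar $3$-tree and verify that no outer $3$-quasi-planar cyclic vertex order exists. An alternative, more combinatorial route would be to identify the $23$-vertex vertex-minimal planar $3$-tree of Fig.~\ref{fig:3tree}(a) as a subgraph of the $4$-level planar $3$-tree, after which heredity yields the conclusion. The main obstacle is precisely part~(c): parts~(a) and~(b) are routine interleaving arguments, whereas~(c) requires either engineering the subgraph embedding or scaling the SAT encoding to a graph of substantially larger size, which is more a computational than a theoretical difficulty.
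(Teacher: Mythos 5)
Your proposal is correct and takes essentially the same route as the paper, which states Proposition~\ref{prop:somenotoqp} without an explicit proof: parts~(a) and~(b) are left as the routine interleaving observations you spell out (your reduction to $K_{3,5}$ and $K_6$ by heredity and your case analysis both check out), and part~(c) is supported in the paper exactly by the SAT check of Appendix~\ref{sec:oqpchecker} applied to the $23$-vertex planar $3$-tree of Fig.~\ref{fig:3tree}(a), which is a subgraph of the four-level planar $3$-tree, combined with heredity. The only difference is that you supply the elementary arguments for (a) and (b) that the paper merely asserts.
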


\begin{proof}
  Using the SAT formulation in Appendix~\ref{sec:oqpchecker}, we
  verified that $K_{3,5}$, $K_6$, and the planar 3-tree with four
  complete levels are all not outer quasi-planar.  Clearly, every
  graph that contains any of these three graphs as subgraphs is
  also not outer quasi-planar.
\end{proof}

Together, Propositions~\ref{prop:someoqp} and~\ref{prop:somenotoqp}
immediately yield the following.
\begin{theorem}
  \label{thm:containment}
  The class of planar graphs and the class of outer quasi-planar
  graphs are incomparable under containment.
\end{theorem}
\begin{remark}
For outer $k$-quasi-planar graphs with $k > 3$, containment questions
become more intricate.  Every planar graph is outer 5-quasi-planar because
planar graphs have page number~4 \cite{YANNAKAKIS198936}.  There are
planar graphs that are not outer quasi-planar (the planar 3-trees
with at least four complete levels).  It is open
whether every planar graph is outer 4-quasi-planar.
\end{remark}

\begin{figure}[h]
    \begin{subfigure}[b]{0.46\textwidth}
      {\small\sffamily\bfseries (a)}\hspace{-3ex}\includegraphics[page=1]{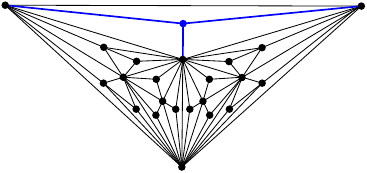}
    \end{subfigure}
    \hfill
    \begin{subfigure}[b]{0.3\textwidth}
      {\small\sffamily\bfseries (b)}\hspace{-3ex}\includegraphics[page=2]{big}
    \end{subfigure}
    \caption{A planar 3-tree $G$ (with 23 vertices) that is not outer
      quasi-planar:
      (a)~planar drawing of~$G$; (b)~convex drawing of~$G$ that is not
      outer quasi-planar (see the highlighted triangular inner faces)
      but contains an outer quasi-planar drawing of $G$ minus the blue
      vertex.}
    \label{fig:3tree}
\end{figure}

\section{Testing for Full Convex Drawings via MSO$_2$}
\label{sec:full}

The class of \emph{full outer $k$-planar graphs} was introduced by 
Hong and Nagamochi~\cite{hn-ltatfo2p-DAM19}.
Recall that this class consists of the graphs that admit a $k$-planar
convex drawing where no crossing lies on the boundary of the outer face.
Hong and Nagamochi gave a linear-time
recognition algorithm for full outer $2$-planar graphs.
They state that a graph $G$ is (full) outer $2$-planar if and only 
if its biconnected components are (full) outer $2$-planar and that 
the outer boundary of a full outer $2$-planar drawing of a 
biconnected graph $G$ is a Hamiltonian cycle of $G$.
We call the subclasses of outer $k$-planar and outer $k$-quasi-planar graphs   
that have a convex drawing where the circular order forms a 
Hamiltonian cycle \emph{closed outer $k$-planar} and \emph{closed
  outer $k$-quasi-planar}, respectively.  We observe that the
property stated by Hong and Nagamochi carries over to general
outer $k$-planar and outer $k$-quasi-planar graphs.

\begin{observation}[\!\!\cite{hn-ltatfo2p-DAM19}]
  \label{obs:biconnected-components}
  A graph $G$ is full outer $k$-planar (full outer $k$-quasi-planar)
  if and only if its biconnected components are closed outer $k$-planar
  (closed outer $k$-quasi-planar).
\end{observation}
This observation follows from the fact that each biconnected component
of a full outer $k$-planar (full outer $k$-quasi-planar) drawing is
bounded by a cycle, and thus, is a closed outer $k$-planar (closed
outer $k$-quasi-planar) graph.

We show that we can encode closed outer $k$-planarity and 
closed outer $k$-quasi-planarity using Monadic Second-Order Logic (MSO$_2$). 
To do so, we initially give a brief introduction to MSO$_2$ and
Courcelle's theorem.
Then we design MSO$_2$ formulas expressing crossing patterns of closed $k$-planar and closed $k$-quasi-planar drawings. 
Using Observation~\ref{obs:biconnected-components}, we can test
the full outer $k$-planarity of a graph by testing its biconnected
components for closed outer $k$-planarity
using the MSO$_2$ formulas.  This, together with Courcelle's theorem
(see Theorem~\ref{thm:courcelle} below) and the fact that outer $k$-planar
graphs have bounded treewidth (see~\cite[Proposition~8.5]{Wood2007}
or~\cite{fgkow-btokp-GD24})
yields a linear-time algorithm for testing full outer $k$-planarity.
Unfortunately, grid graphs, which are outer quasi-planar, have unbounded
treewidth.  So the same approach does not yield a linear-time
algorithm for testing full outer $k$-quasi-planarity.

Monadic Second-Order Logic (MSO$_2$)~-- a subset of \emph{second-order
  logic}~-- can be used to express certain graph properties.
It is built from the following primitives.
\begin{itemize}
\item variables for vertices, edges, sets of vertices, and sets of edges;
\item binary relations for: equality ($=$), membership in a set ($\in$), subset of a set ($\subseteq$),
 and edge--vertex incidence ($I$);
\item standard propositional logic operators: $\lnot$, $\land$, 
$\lor$, $\rightarrow$, and $\leftrightarrow$.
\item standard quantifiers ($\forall,\exists$) which can be applied to all types of variables.
\end{itemize}
Properties expressed in this logic allow us to use the powerful
algorithmic result of Courcelle stated next. 

\begin{theorem}[\!\!\cite{courcelle1990,courcelle2012}]
  \label{thm:courcelle}
  For any integer $t \geq 0$ and any MSO$_2$ formula $\phi$ of length
  $\ell$, an algorithm can be constructed that takes a graph $G$ and
  decides whether $G$ satisfies $\phi$.  If~$G$ has $n$ vertices, $m$
  edges, and treewidth at most $t$, then the algorithm runs in time
  $O(f(t,\ell)\cdot (n+m))$, where the function $f$ is computable.
\end{theorem}

The challenge in expressing outer $k$-planarity or outer $k$-quasi-planarity in MSO$_2$ is that MSO$_2$ does
not allow quantification over sets of pairs of vertices which involve non-edges.
Namely, it is unclear how to express a set of pairs that forms the
circular order of vertices on the boundary of our convex drawing.
However, if this circular order forms a Hamiltonian cycle in 
our graph, i.e., the given graph is closed, then
we can indeed express this in MSO$_2$.
With the edge set of a Hamiltonian cycle of our graph
in hand, we can then ask that this cycle was chosen in such a way that
the other edges satisfy either $k$-planarity or $k$-quasi-planarity.

The MSO$_2$ formulas presented below assume that a graph $G$ is given
and uses the edges, vertices, and incidences of~$G$.  In the
following, $V$ is the vertex set of~$G$, $u,v \in V$, and
$U \subseteq V$.  Further, $E$ is the edge set of~$G$, $e,f \in E$,
and $F \subseteq E$.  (We also use sub- and superscripted variants of
these variables.)  In addition to the quantifiers above,
we use~$\exists^{=x}$ as shorthand to express
the existence of exactly $x$ (but no $x+1$) pairwise different
elements satisfying a property.

The first formula expresses the connectivity of a subgraph induced by
a given edge set~$F$.

\bigskip

\noindent$\textsc{Connected-Edges}(F)
\equiv
(\forall F') \big([F' \neq \emptyset \wedge F' \subsetneq F] \rightarrow$\\
\noindent\phantom{$\textsc{Connected-Edges}(F)\equiv(\forall F')\big($}%
$(\exists e,f,v)[e \in F' \wedge f \in F \setminus F' \wedge v \in V
\wedge I(e,v) \wedge I(f,v)]\big)$
  
\bigskip

It states that, for every nonempty proper subset $F'$ of the given
edge set $F$, we can find an edge~$e$ in~$F'$, an edge~$f$
in~$F \setminus F'$, and a vertex that is incident to both.

We use the predicate \textsc{Connected-Edges}$(F)$ and the following
predicates to express the Hamiltonicity of~$G$.  The predicate
\textsc{Cycle-Set}$(F)$ expresses that $F$ forms a set of cycles,
\textsc{Cycle}$(F)$ expresses that $F$ consists of a single cycle, and
\textsc{Span}$(F)$ forces $F$ to span~$G$.
\begin{align*}
 \textsc{Cycle-Set}(F) &\equiv 
 (\forall e)\Big[ e\in F \rightarrow (\exists^{=2} f)\big[ f\in F 
\wedge e \neq f \wedge  (\exists v) [I(e,v) \wedge I(f,v)]\big]\Big] 
\\
 \textsc{Cycle}(F) &\equiv \textsc{Cycle-Set}(F) \wedge \textsc{Connected-Edges}(F)\\
 \textsc{Span}(F) &\equiv (\forall v)(\exists e)[e\in F \wedge  I(e,v)]\\
 \textsc{Hamiltonian}(F) &\equiv [\textsc{Cycle}(F) \wedge \textsc{Span}(F)]
\end{align*}

The following predicate \textsc{Vertex-Partition} expresses the
existence of a partition of a set $U$ of vertices into $k$ disjoint
subsets.
\[\textsc{Vertex-Partition}(U,U_1,\dots,U_k) \equiv(\forall v \in U) 
  \! \left[\left( \bigvee_{i=1}^k v \in U_k \right)
  \!\wedge\! \left( \bigwedge_{i \neq j} \neg (v \in U_i \wedge v \in U_j)
  \right) \right]\]
Using \textsc{Vertex-Partition}, we can formulate
\begin{align*}
  \textsc{Connected}(F,U) \equiv
  & \;(\forall U_1,U_2 \subseteq U) \Big(\textsc{Vertex-Partition}(U,U_1,U_2) \\
  & \;\land \big(\exists e \in F, u_1 \in U_1, u_2 \in U_2 \big)
    \big(I(e,u_1) \land I(e,u_2) \big) \Big),
\end{align*}
which is true if and only if the given set~$U$ of vertices is
connected via the given set~$F$ of edges.

For a closed outer $k$-planar or closed outer $k$-quasi-planar
graph~$G$, we want to express that two edges~$e$ and~$e_i$ cross.  To
this end, we assume that $G$ contains a Hamiltonian cycle~$E^*$
that defines the outer face.  We partition the vertices of $G$ into
three subsets~$C$, $A$, and~$B$, as follows: $C$ is the set containing
the endpoints of~$e$, whereas $A$ and~$B$ are connected subgraphs on
the remaining vertices that use only edges of~$E^*$.  In this way, we
partition the vertices of~$G$ into two sets, one left of~$e$ and 
one right of~$e$.  For such a partition, $e_i$ must cross~$e$ whenever
$e_i$ has one endpoint in~$A$ and one in~$B$.
\begin{align*}
 \textsc{Cro}&\textsc{ssing}(E^*,e,e_i)\equiv (\forall A,B,C) \big[ \big(\textsc{Vertex-Partition}(V,A,B,C)\\
 &\wedge (I(e,x) \leftrightarrow x \in C) \wedge 
\textsc{Connected}(A,E^*) \wedge \textsc{Connected}(B,E^*)\big)\\
 &\rightarrow (\exists a \in A)(\exists b \in B )[I(e_i,a) \wedge
 I(e_i,b)] \big]
\end{align*}
Now we can describe the allowed crossing patterns.  We first express
closed outer $k$-planarity:
\[\textsc{Closed Outer $k$-Planar}_G \equiv (\exists E^*)\Big[\textsc{Hamiltonian}(E^*) \wedge \] \[(\forall e)
\Big[(\forall e_1,\dots,e_{k+1}) \Big[ \Big(\bigwedge_{i=1}^{k+1} e_i \neq e \wedge \bigwedge_{i\neq j} e_i \neq e_j \Big) \rightarrow \bigvee_{i=1}^{k+1} \neg \textsc{Crossing}(E^*,e,e_i)\Big] \Big]\Big]\]

Here we insist that $G$ is Hamiltonian and that, for every edge $e$
and any set of $k+1$ distinct other edges, at least one among them
does not cross~$e$.

Now we express closed outer $k$-quasi-planarity:
 \[\textsc{Closed Outer $k$-Quasi-Planar}_G \equiv (\exists E^*)\Big[\textsc{Hamiltonian}(E^*)\wedge \]
 \[(\forall e_1,\dots,e_k) \Big[ \Big( \bigwedge_{i\neq j} e_i \neq e_j \Big) \rightarrow \bigvee_{i\neq j} \neg \textsc{Crossing}(E^*,e_i,e_j)\Big]\Big]\]

Again, we insist that $G$ is Hamiltonian and further that, for any set
of $k$ distinct edges, there is at least one pair among them that does
not cross. 

The formulas above give us the following.
\begin{theorem}\label{thm:mso2}
Closed outer $k$-planarity and closed outer $k$-quasi-planarity can be expressed in MSO$_2$ with a formula whose size depends only on~$k$.
\end{theorem}

Using the fact that outer $k$-planar graphs have bounded treewidth
yields the following recognition result.
\begin{theorem}
  \label{lem:recfull}
  We can test whether a graph is full outer $k$-planar in time
  linear in the size of the graph.
\end{theorem}
\begin{proof}
Recall that in a full outer $k$-planar 
drawing there is no crossing on the outer boundary of the
drawing and 
each biconnected component of the graph with such a drawing is
a closed outer $k$-planar graph (Observation~\ref{obs:biconnected-components}).
Thus, in order to test full outer $k$-planarity of a given graph~$G$,
it suffices to test whether each of its biconnected components admits a 
closed outer $k$-planar drawing.
We can break up~$G$ into biconnected components by
obtaining the set of cutvertices~\cite{cut-vertices}.
This takes linear time.
Checking each biconnected component for closed
outer $k$-planarity can be done
via the above MSO$_2$ formula in time linear in the size
of the component; see Theorems~\ref{thm:courcelle} and~\ref{thm:mso2}.
The formula also guarantees that the Hamiltonian cycle 
(if present) is placed on the outer boundary of the drawing of each 
component.
We put the individual drawings of the components together by
reidentifying the cutvertices and without introducing any crossings.
This can also be done in linear time.
Thus, the total runtime is linear in the size of the input graph~$G$.
\end{proof}

\section{Discussions and Open Problems}

Every planar graph is outer $5$-quasi-planar because
planar graphs have page number~4 \cite{YANNAKAKIS198936}.  (Planar
graphs that require four pages have been discovered
recently~\cite{yannakakis-2020,BekosKKPRU20}).
There are also planar graphs that are not outer quasi-planar.  The following question is still open.

\begin{question}
Is every planar graph outer 4-quasi-planar?
\end{question}

We now discuss the relation between our crossing-restricted convex 
drawings and 
the class of intersection graphs of chords of a circle, i.e., 
\emph{circle graphs}. 
Such representations are called \emph{chord diagrams}.
Here, a convex drawing $D$ of a graph $G$ can be seen as a chord 
diagram
and as such provides a corresponding graph $H$ where each adjacency between 
two vertices corresponds to a
crossing between the edges of our drawing.
Independent sets in $H$ correspond to collections of pairwise 
non-crossing edges in $D$, i.e.,
outerplanar sub-drawings of~$D$. Thus, $k$-coloring~$H$ corresponds to 
partitioning~$D$ into 
edge sets $E_1, \ldots, E_k$ such that each sub-drawing of~$D$ formed 
by the edges of $E_i$ is outerplanar. That is, the partition $E_1, \ldots, E_k$ forms a book embedding of $G$ with $k$ pages. 
So, $k$-coloring the chord diagram provides a $k$-page book embedding 
of $G$.
Interestingly, it is NP-complete to test whether a chord diagram can 
be 4-colored~\cite{GJMP80}, but
testing whether it can be 3-colored is still open~\cite{wiki:circle_graphs}. 
On the other hand, circle graphs are 
\emph{$\chi$-bounded}~\cite{KOSTOCHKA97}, i.e., the chromatic number 
$\chi(G)$ of a circle graph $G$ is bounded by a function of 
the \emph{clique number} $\omega(G)$ of $G$, that is, the number of
vertices in the maximum clique of~$G$.
Until recently the best known bound was $7\omega^2$ due to Davis et
al.~\cite{dm-cgaqxb-BLM21},
but then Davis showed~\cite{davis22} an improved bound of
$O(\omega \log \omega)$, which is asymptotically tight.  In 
particular, this means that every outer $k$-quasi-planar drawing can 
be partitioned into 
$O(k \log k)$ pages (since we cannot have $k$ mutually 
crossing edges, i.e., there is no $k$-clique in the corresponding 
intersection graph). 
For quasi-planar graphs ($k = 3$) there is a tighter bound. 
Ageev~\cite{A96} showed that any triangle-free circle graph has 
chromatic number at most~5. 
Because for a fixed drawing of an outer quasi-planar graph its
corresponding circle graph is triangle-free, it has chromatic number
at most~5, and thus, we can embed the outer quasi-planar graph in a
book with five pages.
An immediate open question is to improve this bound on the page number.

Ageev \cite{A96} constructed a triangle-free circle graph
$G_\mathrm{ageev}$ with $\chi(G_\mathrm{ageev}) = 5$.
The drawing of the outer quasi-planar graph~$G$ 
corresponding to the circle graph $G_\mathrm{ageev}$
cannot be embedded on four pages because 
the circle graph has
chromatic number~5. It turns out, however, that there exists a linear 
order of the vertices under which $G$ can be embedded on four pages,
even if we add edges to make it
maximal, but there does not exist such an order with the additional
property that the drawing is outer quasi-planar.  We have verified
this by constructing a logical formula that tests outer
quasi-planarity (Appendix~\ref{sec:oqpchecker}) and 4-page
embeddability (Appendix~\ref{sec:pnchecker}) at the same time.

\begin{question}
  \label{quest:ageev}
  Does every outer quasi-planar graph have page number at most~4?
\end{question}

We also point to the gap between the lower bound (see Lemma~\ref{obs:largest_clique}) and the upper bound (see Theorem~\ref{thm:maxmindeg}) for the degeneracy of outer $k$-planar graphs.

\begin{question}
Can we improve the lower bound of ${\lfloor\sqrt{4k+1}\rfloor + 1}$ or the upper bound of $\lfloor 3.5\sqrt{k}\rfloor$ on the degeneracy of outer $k$-planar graphs?
\end{question}

%

The linear runtime algorithm for testing full outer $k$-planarity in
Theorem~\ref{lem:recfull} relies on Courcelle's machinery for solving
MSO$_2$ formulas and, therefore, has a notoriously bad runtime
dependence (of $2^{O(k^2)}$) on the parameter~$k$.  Hence, in light of
a similar result for one-page crossing
minimization~\cite{kot-aifpafopcm-IPEC17}, it is natural to ask
whether this can be improved.
\begin{question}
  \label{ques:dp}
  Is there an explicit dynamic programming algorithm to decide whether
  a graph $G$ is full outer $k$-planar with a better runtime
  dependence on the parameter~$k$?
\end{question}


Last but not least:
\begin{question}
What is the complexity of outer $k$-quasi-planarity testing? 
\end{question}
In general $k$-quasi-planarity testing has notoriously eluded the
efforts to establish its computational complexity even for $k=3$ (or
simply, quasi-planarity testing, with respect to our
definition).  Recently, Angelini et
al.~\cite{albfp-2lqpohccspqrt-SODA21} showed that testing 2-level
quasi-planarity (i.e., quasi-planarity for a bipartite graph in a
2-layer layout, where the vertices of each partition are on one of two
parallel lines and the edges are drawn between the lines) is
NP-complete, making first progress in this direction.  Can this be
adapted to show NP-completeness for testing outer $k$-quasi-planarity
for any $k\ge3$?

\bibliographystyle{plainurl}
\bibliography{abbrv,refs}

\appendix
\clearpage

\section{SAT Formulations}
\label{sec:satchecker}

In the following two sections we describe SAT formulations that can be
used to test whether a given graph is outer quasi-planar
(Section~\ref{sec:oqpchecker}) and to compute its page
number (Section~\ref{sec:pnchecker}).  We present the formulas in
first-order logic.  After transformation to Boolean logic, the
resulting formulas can be solved using, e.g., MiniSat
(see \url{http://minisat.se/}).

\subsection{Outer Quasi-Planarity Checker}
\label{sec:oqpchecker}

In this section, we describe a logical formula for testing whether a
given graph~$G$ is outer quasi-planar (that is, outer
3-quasi-planar).
We used this formula in order to prove
Propositions~\ref{prop:someoqp}(d) and~\ref{prop:somenotoqp} in
Section~\ref{sec:o-kqp}.
An outer quasi-planar embedding corresponds to a
circular order of the vertices.
If we cut a circular order at some vertex to turn the circular into a
linear order, the edge crossing pattern remains the same.
Therefore, we look for a linear order.
For any pair $(u,v)$ of vertices of~$G$ with $u \ne v$, we introduce a
Boolean variable~$x_{u, v}$ that expresses whether vertex~$u$
is before~$v$ in the linear order.  In addition, for any pair $(e,e')$
of edges of~$G$ with $e \ne e'$, we introduce a Boolean
variable~$y_{e, e'}$ that expresses whether edge~$e$ crosses
edge~$e'$.  Now we list the clauses of our SAT formula.
\begin{align}
  \label{s_con:transitivity}
  & x_{u,v} \wedge x_{v,w} \Rightarrow x_{u,w}
  && \text{for each } u \ne v \ne w \ne u \in V(G);\\
  \label{s_con:equivalence}
  & x_{u,v} \Leftrightarrow \neg x_{v,u}
  && \text{for each } u\neq v \in V(G);\\
  \label{s_con:edgescross}
  & x_{u,u'} \wedge x_{u',v} \wedge x_{v,v'} \Rightarrow y_{e, e'}
  && \text{for each } e=(u,v) \;\ne\; e'=(u',v') \in E(G); \\
  \label{s_con:edgesdontcross}
  & \neg (y_{e_1,e_2} \wedge y_{e_1,e_3} \wedge y_{e_2,e_3})
  && \text{for each } e_1, e_2, e_3 \in E(G) \text{ with different endpoints.}
\end{align}
The first two sets of clauses describe the linear order:
transitivity~\eqref{s_con:transitivity} and
anti-symmetry~\eqref{s_con:equivalence}.
Clauses~\eqref{s_con:edgescross} realize the
intended meaning of the variables~$y_{e, e'}$.  Finally,
clauses~\eqref{s_con:edgesdontcross} ensure that no
three edges pairwise cross.

\subsection{Page Number Checker}
\label{sec:pnchecker}

In this section we provide a SAT formula that, given a graph~$G$
and an integer $k>0$, has a satisfying truth assignment if and only if
$G$ has page number at most~$k$.
A similar SAT formulation has been implemented by
Pupyrev~\cite{p-mllpg-GD17} (see \url{http://be.cs.arizona.edu/}).
For completeness, we list the constraints that we used in order to
compute the page number of~$G_\mathrm{ageev}$ (see
Question~\ref{quest:ageev}).  We find a linear order of the vertices
that corresponds to a $k$-page embedding.
For every pair $(u,v)$ of vertices of $G$ with $u \ne v$, we introduce
a Boolean variable~$x_{u, v}$ (as in Section~\ref{sec:oqpchecker})
that expresses whether $u$ is before $v$ in the linear order.
For every edge~$e$ of~$G$ and every page $i\in\mathcal{P}=\{1, \dots, k\}$,
we introduce a Boolean variable~$p_{i, e}$ that expresses whether
edge~$e$ is on page~$i$.  Now we list the clauses of our SAT formula.
\begin{align}
  \label{s_con:pn_transitivity}
  & (x_{u,v} \wedge x_{v,w}) \Rightarrow x_{u,w}
  && \text{ for each } u \ne v \ne w \ne u \in V(G);\\
  \label{s_con:pn_equivalence}
  & x_{u,v} \Leftrightarrow \neg x_{v,u}
  && \text{ for each } u\neq v \in V(G);\\
  \label{s_con:pn_edgeonsomepage}
  & \bigvee_{i \in \mathcal{P}} p_{i,e}
  && \text{ for each } e \in E(G);\\
  \label{s_con:pn_edgesdontcrossonpage}
  & (p_{i,e} \wedge p_{i,e'}) \Rightarrow
  \neg(x_{u,u'} \wedge x_{u',v} \wedge x_{v,v'})
  && \text{ for each } i \in \mathcal{P}, u \ne u', v \ne v' \in V(G), \\
  & && \quad e=(u,v), e'=(u',v') \in E(G)
\end{align}
The first two sets of clauses,
\eqref{s_con:pn_transitivity}--\eqref{s_con:pn_equivalence}, are the
same as clauses~\eqref{s_con:transitivity}--\eqref{s_con:equivalence}
since they describe the linear order.
Clauses~\eqref{s_con:pn_edgeonsomepage} 
guarantee that every edge is on some page.
Clauses~\eqref{s_con:pn_edgesdontcrossonpage} ensure that two edges
that have different endpoints and lie on the same page do not cross.
(If two edges share an endpoint, they cannot cross.)

\end{document}